\def\BibTeX{{\rm B\kern-.05em{\sc i\kern-.025em b}\kern-.08em
    T\kern-.1667em\lower.7ex\hbox{E}\kern-.125emX}}
\newtheorem{prob}{Problem}
\newtheorem{thm}{Theorem}
\newtheorem{cor}{Corollary}
\newtheorem{prop}{Proposition}
\newtheorem{lem}{Lemma}
\newtheorem{defi}{Definition}
\begin{document}

\title{Maximizing Privacy in MIMO Cyber-Physical Systems \\ Using the Chapman-Robbins Bound}

\author{Rijad Alisic, Marco Molinari, Philip E. Par\'e, and Henrik Sandberg*\thanks{*Rijad Alisic, Philip E. Par\'e, and Henrik Sandberg are with the Division of Decision and Control Systems and Marco Molinari is with the Division of Applied Thermodynamics and Refrigeration at KTH Royal Institute of Technology, Sweden (e-mail: rijada, marcomo, philipar, hsan@kth.se)}}

\maketitle

\begin{abstract}
    Privacy breaches of cyber-physical systems could expose vulnerabilities to an adversary. Here, privacy leaks of step inputs to linear-time-invariant systems are mitigated through additive Gaussian noise. Fundamental lower bounds on the privacy are derived, which are based on the variance of any estimator that seeks to recreate the input. Fully private inputs are investigated and related to transmission zeros. Thereafter, a method to increase the privacy of optimal step inputs is presented and a privacy-utility trade-off bound is derived. Finally, these results are verified on data from the KTH Live-In Lab Testbed, showing good correspondence with theoretical results.
\end{abstract}

\vspace{-10pt}

\section{Introduction} \label{sec:intro}
Digitalization is rapidly transforming many aspects of society, using data collected by sensors in smart cities, manufacturing facilities and energy networks, in order to decrease costs, detect faults and improve the experience of its end-users. 
However, the introduction of these sensors makes eavesdrop attacks possible, breaching the confidentiality of the system that is being spied upon. 
Eavesdropping is difficult to detect, since it does not affect the system directly. The leaked information could be used by the attacker to figure out the structure of the underlying system and learn its weaknesses. A way to keep the information from getting into the hands of an adversary could be by means of encryption. 
However, encryption comes with its own set of difficulties, for example, increasing time delays in data streams or increased maintenance cost due to secret key handling~\cite{Liu03selectiveencryption}. Additionally, the increased processing time could make the system more susceptible to \emph{Denial-of-Service} attacks, since it becomes easier to make the system miss its real-time computation constraints.

Instead, another defense strategy would be to introduce noise into the data stream, which makes the adversary uncertain about what the actual signal is. An example of where this approach is used is in the concept of differential privacy, which is a popular tool that hides user information in databases~\cite{Dwork_Smith_2010}. A database can release various structures of its data to anyone without explicitly revealing its individual data entries. However, an adversary could combine this information with side information to deduce the individual entries, thus breaching the privacy of the database. A differentially private database removes this possibility, for example, by corrupting the answers to queries with noise so that it is not possible to reveal individual entries with additional side information.

In dynamical systems the data has an additional component, namely time. The adversary can make use of models of the system in order to reconstruct corrupted data. One definition of privacy in this context is the estimation error of the system's internal states which is proposed in~\cite{farokhisandberg2018}. Introducing noise is a central component in that work as well and increasing the estimation error variance is related to increasing differential privacy~\cite{dwork2008}. A more direct generalization of differential privacy to dynamical systems is shown in~\cite{lenypapas2014}, where the privacy of input signals is considered.

Privacy has also been considered in the context of hypothesis testing~\cite{lioechtering2015}. An attacker considers the value of the state of the system to be different hypotheses, and uses the measurements to determine which hypothesis is true. The privacy is defined as being the type-II error of a hypothesis test, namely the probability of missing to declare that the correct hypothesis is true. 

Guided by an example of a privacy leakage scenario in a multi-residential smart building, we consider privacy to be a combination of concepts from~\cite{farokhisandberg2018} and~\cite{lenypapas2014}. A smart building uses sensors to read its current state, for example, temperature and CO$_2$ in different rooms. It also uses actuators, via a controller, to shift the building system into states which are desirable for its residents. The desirable states depend on what rooms the residents are currently in. Thus, there are streams of data inside the building which are being transmitted between the sensors, the control system, and the actuators. These data streams contain confidential information about the residents, for example, if they are home or not.

Assume now that there exists an attacker who gains access to some of the sensor-to-controller data streams, with the objective of detecting changes in the controller-to-actuator data streams, which indicates that a resident has moved from one room to another. Let the attacker know the model of the system, its initial state, and the shape of the input sequence \emph{a priori}. However, the attacker does not know at what time step the input sequence starts and, therefore, the attacker wants to estimate it. The estimation of the change time is done using a series of hypotheses tests, which translates into solving a \emph{change point problem}. Although the literature on this type of problem is extensive, there are no \emph{uniform minimum variance estimators} (UMVE) for detecting when a change occurs~\cite{lehmanncasella1998,bassevillenikiforov1993}. Instead, the adversary is forced to solve a combinatorial problem. Since there are no UMVE for the change time of a step input, we define privacy to be the lowest obtainable variance of the estimated change time using any estimator.



This work is an extension of a {previous paper}~\cite{alisic2020bounding_arxiv}, where the problem was restricted to single-input-single-output systems. There, an analysis of the system structure was conducted in order to figure out what structural components enable privacy leaks. For example, it was shown that a large signal-to-noise ratio (SNR) was a contributing factor. However, a defender can easily control the SNR by modifying the additive noise in the measurements. Instead, the analysis showed how slow dynamics also produce a large privacy leakage, since it allows for the adversary to collect more samples of the output signal during the change. The results of that study also revealed that unstable and integrator systems have the largest privacy leaks. 

In this paper, we consider a similar setting, but with the extension to multiple-input-multiple-output (MIMO) systems. The inputs are assumed to be step changes, so that the system reaches some desired steady states, which implies that we only consider stable systems with no integrator dynamics appearing in the output. In this setup, fundamental lower bounds on the estimation of the change time are derived. We show that in some cases it is possible to change the inputs to satisfy the same constraints at steady state, while simultaneously increase the difficulty of estimating the change time for an adversary. 


In the next section, we formulate the problem by posing two main research questions. The first question asks how much information about the change time an adversary can uncover and the second asks if there is a way to use pre-existing noise in the system to change the inputs so that privacy leaks are reduced. The answers to these problems are given in Section~\ref{sec:results}, and their implications are analyzed. Finally, in Section~\ref{sec:numerics}, we return to the motivating example of an adversary eavesdropping on a smart building. There we highlight what the theory predicts about an adversary’s capabilities of breaching the privacy of the residents and how a defender could design input signals that minimize the privacy leakage.

\vspace{-5pt}

\section{Problem Formulation}\label{sec:probform}



Consider a linear time invariant system where the measurements are corrupted by a zero-mean, {stationary, white Gaussian signal, $e_k$, with covariance $\mathbb{E}\left[e_k e_k^\top\right] = \Sigma_e$, $\forall k$}. Then, the system model $M$ can be written as:
\begin{equation} \label{eq:model}
M : \begin{cases}
    x_{k+1} = Ax_k+Bu_k  \\
    \hspace{11pt} y_k = Cx_k  + e_k,
\end{cases}  
\end{equation}
where $x_{k} \in \mathbb{R}^n$, $y_k \in \mathbb{R}^m$, $e_k \in \mathbb{R}^{m}$ and $u_k \in \mathbb{R}^p$. The system matrices $A\in \mathbb{R}^{n \times n}$, $B\in \mathbb{R}^{n \times p}$ and $C \in \mathbb{R}^{m \times n}$ together with the noise model {for $e_k$} define $M$. Denote the sequence of outputs and inputs as {$Y = \{y_k\}_{k=0}^N$ and ${U = \{u_k\}_{k=0}^N}$,} respectively. 
The input sequence, $U$, is assumed to be a step,
\begin{equation*}
    u_k = \begin{cases} 0, \quad \text{for } k<k^*, \\ u, \quad \text{for } k\geq k^*, \end{cases}
\end{equation*}
where $\Vert u \Vert_2$ is the size of the input, and {$u/ \Vert u\Vert_2$ is its direction.}

The objective of the attacker, is to estimate the change time, $k^*$, using the model $M$ and the measurements $Y$. A defender's main purpose is then to make it as difficult as possible for the attacker to obtain their goal. Motivated by the attacker's goal, we define privacy in the following manner:
\begin{defi}\label{def:priv}
Consider an estimator of the change time $k^*$ for the inputs $U=\{u_k\}_{k=0}^N$, which are fed through system $M$ in~\eqref{eq:model}. Denote the estimator of $k^*$ by $\psi$, {which has a bias that is bounded}. We define the privacy of system $M$ to be the lowest achievable variance of the estimated change time,
\begin{equation*}
    \min \limits_{\psi} \left( \mathrm{Var}(\psi|k^*)\right).
\end{equation*}
\end{defi}
This definition of privacy is general and the defender may consider estimators which take very complex information into account. The problem of interest in this paper, however, is to calculate the privacy of system $M$, conditioned on the type of estimators that the attacker can produce.




\begin{prob} \label{prob:problem}
Let an estimator of the change time $k^*$ in $U$, denoted by $\psi_u(Y,M)$, have access to the model~\eqref{eq:model} and the measurements $Y$ of length $N$ such that $N \geq k^*$. What is the minimum variance that any such estimator can achieve? \end{prob}

In physical systems, measurement noise is typically present or, alternatively, injecting noise into measurements might come with some costs. Consequently, a defense strategy might  include ways to choose the input such that the existing noise is used to hide the input. Therefore, we pose the following question:
\begin{prob}\label{prob:prob2}
Consider an estimator of $k^*$, $\psi(Y,M)$, that has access to the model~\eqref{eq:model} and the measurements $Y$ of length $N$ such that $N \geq k^*$. Is it possible to choose $U$ so that the lower bound on the variance of any such estimator is increased?
\end{prob}

The answer to these questions will show what structures in the model, $M$, expose the change time, $k^*$, to an adversary. The answer also provides the defender with information about how to design their system so that estimating the change time becomes as difficult as possible.  Although any level of privacy can be achieved by injecting enough noise into the system, additional noise also degrades the controller performance. If the controller aims to minimize a cost function, then the actual cost increases when noise is added. It is therefore important that the noise which is already present is used to the fullest extent, which could be done by placing the sensors strategically or by designing controllers that make multiple actuators cooperate so that a particular change is more difficult to estimate. {In this paper, we will assume that the defender knows the noise model \textit{a priori}, which might not always be true for real systems.}

\vspace{-5pt}

\section{Main Results}\label{sec:results}
The Cram\'er-Rao lower bound~\cite{Jansen2011} is typically used to answer questions like Problem~\ref{prob:problem}. A difficulty here is that the Cram\'er-Rao lower bound is only defined for continuous parameters, whereas $k^*$ takes discrete values. Therefore, a more general result is required in order to answer Problem~\ref{prob:problem}. Such a result is made possible by the Chapman-Robbins (CR) bound~\cite{chapmanrobbins1951}. With the CR-bound, it is possible to show what structures in the model $M$ expose the change time to an adversary. {Therefore, subsequently we will use this information to find an input that increases the smallest possible variance of the estimate, defined as $\mathcal{B}_u(M)$.} 

\begin{thm} \label{thm:main}
Consider any estimator of the change time $k^*$ in the input sequence $U$. Denote the estimator by $\psi(Y,M)$ with bias $g(k^*)$, where $M$ is a MIMO-system. Then \vspace{-1pt}
\begin{equation} \label{eq:bound}
    \mathrm{Var}(\psi_u(Y,M)|k^*) \geq \mathcal{B}_u\left(M \right),
\end{equation}
\vspace{-10pt}where,

\begin{equation*}
    \mathcal{B}_u\left(M \right) \coloneqq \max \limits_{\tau} \frac{(\tau + g(k^* +\tau) - g(k^*))^2}{\mathrm{e}^{{ u^\top\mathcal{S}(\tau,M)u }}-1},
\end{equation*}
for $\tau \in \{1, \,  \dots, \, N-k^*\}$. Here,
\begin{equation} \label{eq:sum}
    \mathcal{S}(\tau,M) = \sum \limits_{k=k^* +1}^N  \left(C\tilde A(k,\tau) B \right)^\top \Sigma^{-1}_eC\tilde A(k,\tau)B,
\end{equation}
\vspace{-10pt}where, 
\begin{equation} \label{eq:t1e3}
    \tilde A(k,\tau)= \left(\sum \limits_{l=k^*}^{\mathrm{min}(k^* + \tau -1,k-1)} A^{k-1-l}\right).
\end{equation}
\end{thm}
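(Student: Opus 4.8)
The plan is to invoke the Chapman-Robbins (CR) bound, which, unlike the Cram\'er-Rao bound, remains valid when the estimated parameter---here the change time $k^*$---is discrete. For any estimator $\psi$ of a parameter $\theta$ and any admissible shift $\tau$, the CR-bound states
\[
    \mathrm{Var}(\psi|\theta) \geq \frac{\left(\mathbb{E}_{\theta+\tau}[\psi] - \mathbb{E}_\theta[\psi]\right)^2}{\mathbb{E}_\theta\!\left[\left(p_{\theta+\tau}/p_\theta\right)^2\right] - 1},
\]
where $p_\theta$ denotes the likelihood of $Y$ under $\theta$, and the tightest such bound is obtained by maximizing over $\tau$. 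Setting $\theta = k^*$ and using that $\psi$ has bias $g$, so that $\mathbb{E}_{k^*}[\psi] = k^* + g(k^*)$, the numerator immediately becomes $(\tau + g(k^*+\tau) - g(k^*))^2$, matching the numerator of $\mathcal{B}_u(M)$. What remains is to evaluate the denominator, which is the chi-square divergence between the output laws induced by change times $k^*$ and $k^*+\tau$.

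First I would write out the measurement law explicitly. Since the attacker knows $M$ and the initial state, the only stochastic component of $y_k$ is the white Gaussian noise $e_k$; hence $Y$ is jointly Gaussian with a block-diagonal covariance built from $\Sigma_e$ that does not depend on $k^*$, while its mean is the deterministic step response. Propagating the state gives $\mathbb{E}[y_k] = CA^k x_0 + C\!\left(\sum_{l=k^*}^{k-1} A^{k-1-l}\right)\!Bu$ under the true change time. Subtracting the mean obtained when the change is delayed to $k^*+\tau$ produces a mean shift $\Delta\mu_k = C\tilde A(k,\tau)Bu$, where the difference of the two truncated input sums telescopes to precisely $\tilde A(k,\tau)$ in~\eqref{eq:t1e3}. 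In particular $\tilde A(k,\tau)=0$ for $k \leq k^*$, which explains why the summation in~\eqref{eq:sum} starts at $k^*+1$: before the change occurs the two hypotheses are indistinguishable in mean.

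Next I would compute the denominator. Because the noise is white, both $p_{k^*}$ and $p_{k^*+\tau}$ factor across time, so $\mathbb{E}_{k^*}\!\left[(p_{k^*+\tau}/p_{k^*})^2\right]$ factors into a product of per-sample integrals. For two Gaussians sharing the covariance $\Sigma_e$ and differing in mean by $\Delta\mu_k$, a completion of the square in the Gaussian integral yields a per-sample factor $\exp\!\left(\Delta\mu_k^\top \Sigma_e^{-1}\Delta\mu_k\right)$. Taking the product over $k$ converts the sum of the exponents into $\exp\!\left(\sum_k \Delta\mu_k^\top \Sigma_e^{-1}\Delta\mu_k\right)$, and inserting $\Delta\mu_k = C\tilde A(k,\tau)Bu$ identifies the exponent as $u^\top \mathcal{S}(\tau,M)u$ with $\mathcal{S}$ as in~\eqref{eq:sum}. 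Subtracting one recovers the denominator $\mathrm{e}^{u^\top \mathcal{S}(\tau,M)u}-1$, and maximizing over the feasible shifts $\tau \in \{1,\dots,N-k^*\}$ completes the bound~\eqref{eq:bound}.

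The main obstacle is this denominator computation: correctly assembling the joint Gaussian likelihood ratio over the whole horizon, exploiting whiteness to factorize it, and carrying out the completion of the square so that the quadratic form $u^\top \mathcal{S}(\tau,M)u$ emerges cleanly rather than, say, a form involving the average of the two means. A secondary but essential point is the index bookkeeping in $\tilde A(k,\tau)$: the two partial input sums must telescope to the single sum with upper limit $\min(k^*+\tau-1,k-1)$, and it is exactly this truncation that both sets the lower summation limit at $k^*+1$ and makes the mean shift vanish prior to the change.
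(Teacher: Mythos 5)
Your proposal is correct and follows essentially the same route as the paper's proof: invoke the Chapman--Robbins bound for the discrete parameter $k^*$, identify the numerator via the bias, and evaluate the denominator as the chi-square divergence between the two Gaussian output laws, whose mean difference telescopes to $C\tilde A(k,\tau)Bu$ so that the exponent becomes $u^\top\mathcal{S}(\tau,M)u$. The only differences are cosmetic: you exploit whiteness to factor the divergence into per-sample Gaussian integrals (cleaner than the paper's monolithic log-likelihood manipulation), and you restrict to $\tau \geq 1$ from the outset, which is harmless since the theorem only claims the bound maximized over $\tau \in \{1,\dots,N-k^*\}$, whereas the paper additionally verifies that negative shifts never improve the bound.
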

\begin{proof}
See the Appendix.
\end{proof}


Much attention will be given to $\mathcal{S}(\tau,M)$ in the subsequent sections, since it determines the smallest possible variance of any estimator, $\mathcal{B}_u(M)$. It will be assumed that the estimator is unbiased, so that $g(k^*)=0$. The minimum variance in~\eqref{eq:bound} depends explicitly on $u$, which implies that different inputs provide different levels of privacy. 

{
\begin{defi}\label{def:privdir_1}
Let $\underline \lambda_\tau(\mathcal{S})$ be the smallest eigenvalue of $\mathcal{S}(\tau,M)$. The most private input direction is defined to be the eigenvector $u_*$, corresponding to the $\underline\lambda_{\tau^*}$ eigenvalue, where $\tau^*$ maximizes~\eqref{eq:bound}.

\end{defi}

Furthermore, it may be impossible to estimate the change time for some particular directions of $u$. These directions are given by the following definition.

\begin{defi}\label{def:privdir_2}
We say that $u_*$ is a fully private input direction if $ \underline \lambda_{\tau^*}(\mathcal{S})=0$.
\end{defi}
}
Definition~\ref{def:privdir_1} is justified by the following proposition, which states that step-changes in the direction of $u_*$ provide the most privacy to a system.

\begin{prop}\label{prop:most1private}
The smallest variance for estimating change time of the step, $k^*$, is maximized in the direction of $u_*$.
\end{prop}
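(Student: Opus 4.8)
The plan is to treat the lower bound $\mathcal{B}_u(M)$ as a function of the input direction $u/\Vert u\Vert_2$ at a fixed magnitude $\Vert u\Vert_2$, and to show that no direction yields a larger value than $u_*$. The starting observation is that each summand of $\mathcal{S}(\tau,M)$ in~\eqref{eq:sum} has the form $(C\tilde A(k,\tau) B)^\top \Sigma_e^{-1} (C\tilde A(k,\tau) B)$ with $\Sigma_e^{-1}\succ 0$, so every $\mathcal{S}(\tau,M)$ is symmetric positive semidefinite. Consequently the Rayleigh quotient $u^\top \mathcal{S}(\tau,M)u$ is minimized, among all inputs of a fixed norm, when $u$ is aligned with the eigenvector associated with the smallest eigenvalue $\underline\lambda_\tau(\mathcal{S})$, and the minimal value is $\Vert u\Vert_2^2\,\underline\lambda_\tau(\mathcal{S})$.

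Next I would exploit the monotonicity of the map $t\mapsto \tau^2/(e^{t}-1)$, which is strictly decreasing for $t>0$. Hence, for every fixed $\tau$ and every input of the given magnitude,
\begin{equation*}
    \frac{\tau^2}{e^{u^\top\mathcal{S}(\tau,M)u}-1} \;\le\; \frac{\tau^2}{e^{\Vert u\Vert_2^2\,\underline\lambda_\tau(\mathcal{S})}-1},
\end{equation*}
where the right-hand side no longer depends on the direction. Taking the maximum over $\tau$ on both sides, and invoking the definition of $\tau^*$ as the index that maximizes the right-hand expression, yields a single direction-independent ceiling,
\begin{equation*}
    \mathcal{B}_u(M) \;\le\; \frac{(\tau^*)^2}{e^{\Vert u\Vert_2^2\,\underline\lambda_{\tau^*}(\mathcal{S})}-1}.
\end{equation*}

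To finish, I would show that $u_*$ saturates this ceiling. Since $u_*$ is by construction the smallest-eigenvalue eigenvector of $\mathcal{S}(\tau^*,M)$, we have $u_*^\top\mathcal{S}(\tau^*,M)u_* = \Vert u\Vert_2^2\,\underline\lambda_{\tau^*}(\mathcal{S})$, so the term indexed by $\tau=\tau^*$ in $\mathcal{B}_{u_*}(M)$ already equals the ceiling above; because $\mathcal{B}_{u_*}(M)$ is itself a maximum over $\tau$, it is at least this large. Combining the two inequalities gives $\mathcal{B}_{u_*}(M)\ge \mathcal{B}_u(M)$ for every competing direction, which is the claim, and in fact shows the ceiling is attained with equality at $u_*$.

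The step I expect to require the most care is the outer maximization over $\tau$: because the smallest-eigenvalue eigenvector generally differs from one $\tau$ to another, one cannot minimize $u^\top\mathcal{S}(\tau,M)u$ simultaneously for all $\tau$. The resolution — and the crux of the argument — is that one need not do so; it suffices to bound each term uniformly over directions to obtain a global ceiling, and then exhibit the single input $u_*$, aligned with the minimizing eigenvector at the maximizing index $\tau^*$, that attains it. A minor additional point to confirm is the fully private edge case $\underline\lambda_{\tau^*}(\mathcal{S})=0$, where the bound diverges and the statement holds trivially in the extended-value sense.
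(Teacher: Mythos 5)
Your proof is correct, and its core mechanism is the same one the paper uses: minimize the Rayleigh quotient $u^\top\mathcal{S}(\tau^*,M)u$ over directions of fixed norm, then exploit the strict monotone decrease of $t\mapsto \tau^2/(\mathrm{e}^{t}-1)$. The difference is that you carry the argument further than the paper does. The paper's entire proof is one sentence --- ``Minimizing $u^\top\mathcal{S}(\tau,M)u$ for $\tau=\tau^*$ maximizes the bound in~\eqref{eq:bound}'' --- which, taken literally, only shows that $u_*$ maximizes the single term indexed by $\tau^*$; since $\mathcal{B}_u(M)$ is a maximum over all $\tau$, this alone does not exclude that some other direction attains a larger bound at a different $\tau$ (precisely the phenomenon the paper itself observes in Section~\ref{sec:numerics}, where the maximizing $\tau$ changes as $u$ is rotated). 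Your direction-independent ceiling $\max_\tau \tau^2/\bigl(\mathrm{e}^{\Vert u\Vert_2^2\,\underline\lambda_\tau(\mathcal{S})}-1\bigr)$, combined with the observation that $u_*$ attains it at $\tau^*$, closes exactly that gap, and your remark about the degenerate case $\underline\lambda_{\tau^*}(\mathcal{S})=0$ handles the fully private situation cleanly. One caveat: your argument requires reading $\tau^*$ in Definition~\ref{def:privdir_1} as the maximizer of the smallest-eigenvalue envelope rather than of~\eqref{eq:bound} evaluated at some fixed $u$; the latter reading would be circular, since $u_*$ is itself defined through $\tau^*$, so yours is the only self-consistent interpretation (and it coincides with how the bound is used). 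In short, the paper's proof buys brevity at the cost of eliding the dependence of the maximizing $\tau$ on $u$; your version makes the proposition airtight.
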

\begin{proof}
Minimizing $u^\top\mathcal{S}(\tau,M)u$ for $\tau=\tau^*$ maximizes the bound in~\eqref{eq:bound}.
\end{proof} \vspace{-8pt} 
\noindent 
Proposition~\ref{prop:most1private} provides a method to find the most private input directions. Therefore, it also gives an affirmative answer to the question that is posed in Problem~\ref{prob:prob2}. 

Notice that so far in the discussion, we have not mentioned the size of the inputs. Theorem~\ref{thm:main} states that a larger $\Vert u \Vert$, implies a smaller $\mathcal{B}_u(M)$ and thus less privacy. Therefore, it is fully possible that the change time in the the most private input direction, $u_*$, is easier to estimate than some $u$, which is not in that direction, if $\Vert u_* \Vert>\Vert u \Vert$. This is related to the SNR, which was treated in~\cite{alisic2020bounding_arxiv}.

\subsection{Fully Private Input Directions}

Fully private input directions have an interesting property, namely that they make $\mathcal{B}_{u}=\infty$. Therefore, it is impossible to estimate when step-changes in this direction occur, since any unbiased estimator of $k^*$ will have an infinite variance. Methods for finding fully private input directions, will be presented in this subsection.


\begin{thm} \label{thm:privin}
A fully private input direction exists if and only if
\begin{equation}\label{eq:t2e1}
    \mathrm{rank}(\mathcal{O}) < p,
\end{equation}
where
\begin{equation}\label{eq:t2e2}
        \mathcal{O} = \begin{bmatrix} C B \\ C A B \\ C A^2 B \\ \vdots \\ C A^{N-1} B \end{bmatrix}.
\end{equation}
\end{thm}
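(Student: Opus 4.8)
The plan is to characterize full privacy directly through the null space of $\mathcal{S}(\tau,M)$, translate that into a condition on the Markov parameters $CA^{r}B$, and finally recognize that condition as $\mathrm{rank}(\mathcal{O})<p$. First I would observe that, by Definition~\ref{def:privdir_2} together with Theorem~\ref{thm:main} (taking the estimator unbiased, $g\equiv 0$, so the numerator in $\mathcal{B}_u(M)$ equals $\tau^2>0$), a direction $u\ne 0$ is fully private precisely when the denominator $\mathrm{e}^{u^\top\mathcal{S}(\tau,M)u}-1$ vanishes for some admissible $\tau$, i.e. when $u^\top\mathcal{S}(\tau,M)u=0$. Since $\Sigma_e^{-1}\succ 0$ and every summand in \eqref{eq:sum} has the form $(C\tilde A(k,\tau)B)^\top\Sigma_e^{-1}(C\tilde A(k,\tau)B)$, the quadratic form is a sum of nonnegative terms, so it vanishes if and only if $C\tilde A(k,\tau)Bu=0$ for every $k=k^*+1,\dots,N$, equivalently $\mathcal{S}(\tau,M)u=0$. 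Hence a fully private direction exists if and only if some $\mathcal{S}(\tau,M)$ is singular.

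Next I would unravel the structure of $\tilde A(k,\tau)$ in \eqref{eq:t1e3}. Writing $k=k^*+j$ and shifting the inner summation index shows that $\tilde A(k^*+j,\tau)$ equals $\sum_{r=0}^{j-1}A^{r}$ when $j\le\tau$ and $\sum_{r=j-\tau}^{j-1}A^{r}$ when $j>\tau$. Setting $G_r\coloneqq CA^{r}B$, this means each $C\tilde A(k^*+j,\tau)Bu$ is a contiguous partial sum of the terms $G_r u$. I would then run a telescoping argument: the $j=1$ condition forces $G_0 u=0$, subtracting consecutive conditions in the regime $j\le\tau$ peels off $G_1u,G_2u,\dots$ one at a time, and the sliding-window differences in the regime $j>\tau$ continue to isolate the remaining Markov parameters. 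This establishes that $\mathcal{S}(\tau,M)u=0$ holds if and only if $CA^{r}Bu=0$ for all $r=0,\dots,N-k^*-1$; moreover this kernel turns out to be independent of $\tau$, so ``for some $\tau$'' and ``for all $\tau$'' coincide.

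Finally I would close the loop with the matrix $\mathcal{O}$ in \eqref{eq:t2e2}. The condition $CA^{r}Bu=0$ over a contiguous range of $r$ is exactly $\mathcal{O}u=0$ up to the number of block rows in the stack; invoking the Cayley--Hamilton theorem, $A^{n}$ and every higher power is a linear combination of $I,A,\dots,A^{n-1}$, so annihilating the first $n$ Markov parameters already forces all of them to annihilate $u$. Consequently the kernel of the stacked matrix stabilises and coincides with $\ker\mathcal{O}$. A nonzero $u$ with $\mathcal{O}u=0$ exists if and only if $\mathrm{rank}(\mathcal{O})<p$, which is \eqref{eq:t2e1}; combined with the first step this yields the claimed equivalence.

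The step I expect to be the main obstacle is the telescoping/sliding-window bookkeeping in the second paragraph: one must treat the two regimes $j\le\tau$ and $j>\tau$ separately and check that the window differences genuinely isolate each individual Markov parameter, and then align the range $0,\dots,N-k^*-1$ produced there with the range $0,\dots,N-1$ appearing in $\mathcal{O}$. This matching is where Cayley--Hamilton does the real work, under the implicit assumption that $N$ is large relative to the state dimension $n$. The positive-definiteness reduction in the first step and the concluding rank statement are routine by comparison.
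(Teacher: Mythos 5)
Your proposal is correct and takes essentially the same route as the paper: its Lemma~1 is precisely your telescoping/sliding-window argument (via the recursion $\tilde A(k,\tau)=A^{k}+\tilde A(k-1,\tau)$ for $k\le\tau-1$ and $\tilde A(k,\tau)=A^{k+1-\tau}\tilde A(\tau-1,\tau)$ afterwards) showing the kernel of the stacked windowed Markov parameters is independent of $\tau$ and equals $\ker\mathcal{O}$, and your positive-definiteness reduction of $u^\top\mathcal{S}(\tau,M)u=0$ and the final rank-deficiency conclusion match the paper's proof of the theorem exactly. Your explicit Cayley--Hamilton step to reconcile the range $r\le N-k^*-1$ arising from $\mathcal{S}$ with the range $r\le N-1$ in $\mathcal{O}$ addresses an index mismatch the paper silently glosses over (its lemma is stated with all $N$ block rows), so your write-up is, if anything, slightly more careful on that point, at the price of the assumption $N-k^*\ge n$.
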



Before we prove this theorem, we need to relate the null space of~\eqref{eq:sum} to~\eqref{eq:t2e2}. 

\begin{lem}\label{lem:inputobservability}
The following holds for any $\tau,\tilde \tau \in \mathbb{N}$,
\begin{equation*}
    \mathcal{O}(\tilde \tau)u=0 \iff \mathcal{O}(\tau)u=0,
\end{equation*}
where
\begin{equation*}
   \mathcal{O}(\tau)= \begin{bmatrix} C\tilde A(0,\tau) B \\ C\tilde A(1,\tau) B \\ C\tilde A(2,\tau) B \\ \vdots \\ C\tilde A(N-1,\tau) B \end{bmatrix}.
\end{equation*}
\end{lem}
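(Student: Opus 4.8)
The plan is to fix an arbitrary direction $u$ and reduce both conditions $\mathcal{O}(\tau)u=0$ and $\mathcal{O}(\tilde\tau)u=0$ to one and the same $\tau$-independent family of linear constraints. To this end I would introduce the shorthand $w_j := CA^j B u$ for $j\ge 0$, which are the natural building blocks here. First I would unpack the definition of $\tilde A(k,\tau)$ in~\eqref{eq:t1e3}: shifting the summation index $l\mapsto l-k^*$ shows that $C\tilde A(k,\tau)Bu$ vanishes identically for $k\le k^*$, equals a partial sum $\sum_{j=0}^{\,k-1-k^*}w_j$ for $k^*+1\le k\le k^*+\tau$, and equals a sliding window of $\tau$ consecutive terms $\sum_{j=k-\tau-k^*}^{\,k-1-k^*}w_j$ for $k> k^*+\tau$. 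The key structural fact this exposes is that every block row of $\mathcal{O}(\tau)$ is a sum of \emph{consecutive} $w_j$'s.

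Next I would run a telescoping induction to convert the block-row equations into the conditions $w_j=0$. Assuming $\mathcal{O}(\tau)u=0$, the first nonzero row ($k=k^*+1$) gives $w_0=0$; each subsequent partial-sum row peels off exactly one new term, yielding $w_0=\cdots=w_{\tau-1}=0$ at the end of the partial-sum block. The sliding-window rows then continue the peeling: once a window is known to have all but its top entry zero, that row forces the top entry to vanish too, and since consecutive windows shift their top index by exactly one, the induction propagates without gaps up to the last row $k=N-1$. This shows $\mathcal{O}(\tau)u=0$ implies $w_j=0$ for every $j$ in the range reached by the rows. The converse is immediate: each row is a sum of a subset of the $w_j$, so if they all vanish then $\mathcal{O}(\tau)u=0$.

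The observation that closes the argument is that this range of indices does not depend on $\tau$. The largest power of $A$ occurring anywhere in $\mathcal{O}(\tau)$ comes from the final row $k=N-1$ and equals $A^{N-2-k^*}$ for every $\tau\ge 1$, while the smallest is always $A^0$. Hence $\mathcal{O}(\tau)u=0$ is equivalent to $w_0=\cdots=w_{N-2-k^*}=0$, a statement with no reference to $\tau$. Running the identical reasoning for $\tilde\tau$ produces the same condition, so $\mathcal{O}(\tau)u=0\iff w_0=\cdots=w_{N-2-k^*}=0\iff\mathcal{O}(\tilde\tau)u=0$, which is exactly the claim.

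I expect the difficulty to be bookkeeping rather than conceptual. The delicate points are the empty sums for $k\le k^*$, the split between the partial-sum regime and the sliding-window regime, and---most importantly---verifying that the induction peels off one fresh term at every step even when $\tau$ is small (so the windows are short), so that the full range $w_0,\dots,w_{N-2-k^*}$ is recovered for every admissible $\tau$. Getting the index boundaries exactly right in the last few rows is where I would be most careful.
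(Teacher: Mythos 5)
Your proof is correct and takes essentially the same route as the paper's: your $\tau$-independent normal form $w_0=\cdots=w_{N-2-k^*}=0$ is precisely the statement $\mathcal{O}u=0$ (i.e.\ the case $\tau=1$, which the paper fixes without loss of generality), your telescoping induction is the paper's recursion-based direction $\mathcal{O}(\tau)u=0\Rightarrow\mathcal{O}u=0$, and your ``immediate converse'' is the paper's linear-combination direction $\mathcal{O}u=0\Rightarrow\mathcal{O}(\tau)u=0$. If anything, your peeling argument makes explicit the sliding-window bookkeeping that the paper's terse induction leaves to the reader.
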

\begin{proof}
Without loss of generality, we can set $\tilde \tau=1$, where $\mathcal{O}(1)=\mathcal{O}$ in~\eqref{eq:t2e2}. The proof for
\begin{equation*}
    \mathcal{O}u=0 \Rightarrow \mathcal{O}(\tau)u=0,
\end{equation*}
is obtained by using~\eqref{eq:t1e3}, which shows that $C\tilde A (k,\tau)B$ is a linear combination of $CA^lB$, $\forall l \in \{1, \dots, N-1 \}$. The proof for
\begin{equation*}
    \mathcal{O}u=0 \Leftarrow \mathcal{O}(\tau)u=0,
\end{equation*}
follows from by rewriting~\eqref{eq:t1e3} as,
\begin{equation*}
    \tilde A(k,\tau)  = \begin{cases}
    I, &  k=0, \\
    A^k+\tilde A(k-1,\tau), & 1\leq k \leq \tau-1 \\
    A^{k+1-\tau} \tilde A(\tau-1,\tau), & \tau \leq k \leq N-1.
    \end{cases}
\end{equation*}
Thus each $C \tilde A (k,\tau)B$ is a linear combination of $CA^kB$ and $C \tilde A (l,\tau)B$, for all $l<k$, which implies $$ 0=\mathcal{O}(\tau)u \Rightarrow \mathcal{O}(\tau)u = \mathcal{O}u.$$ \end{proof}

Let us return to the proof of Theorem~\ref{thm:privin}.

\begin{proof}[Proof of Theorem~\ref{thm:privin}]
If~\eqref{eq:t2e1} holds, then by Lemma~\ref{lem:inputobservability} there is a $u_*$ such that
\begin{equation*}
    \mathcal{O}u_*=\mathcal{O}(\tau)u_*=0,
\end{equation*}
which implies that $u_*^\top\mathcal{S}(\tau,M)u_*=0$ for all $\tau$, making $\mathcal{B}_{u_*}(M) = \infty$.

If $\mathcal{B}_{u_*}(M) = \infty$, then there is a $\tau^*$ such that
\begin{equation} \label{eq:p2e1}
    u_*^\top\mathcal{S}( \tau^*,M)u_*=0
\end{equation}
for some $u_*$. Since $\Sigma_e$ is positive definite,~\eqref{eq:p2e1} implies that
\begin{equation*}
    C \tilde A(k,\tau^*) Bu_* = 0, \; \forall k.
\end{equation*}
Finally, the last step is obtained by applying Lemma~\ref{lem:inputobservability}.\end{proof}

Theorem~\ref{thm:privin} states that there exists a fully private input direction if the input-observability matrix, $\mathcal{O}$, is rank deficient. If it is rank deficient, then the fully private input direction is in the null space of $\mathcal{O}$. Inputs in this direction do not affect the output, which is similar to what inputs that excite zero dynamics do. For sample horizons that are large enough, namely $N-k^* > n$, fully private input directions are a special case of these. Recall the definition of a transmission zero:

\begin{defi}
A zero, $z_0$, is a complex number that makes the Rosenbrock system matrix rank deficient,
\begin{equation*}
    \mathrm{rank}\left(\begin{bmatrix} A-Iz_0 & B \\ C & 0 \end{bmatrix}\right) < m+n, \quad \text{where } m\geq p.
\end{equation*}
We denote $x^0$ as the zero-state direction and $u^0$ as the corresponding zero-input direction, where
\begin{equation}\label{eq:zerodir}
    0=\begin{bmatrix} A-Iz_0 & B \\ C & 0 \end{bmatrix} \begin{bmatrix} x^0 \\ u^0 \end{bmatrix}.
\end{equation}
\end{defi}

A simpler way to determine if an input direction is fully private is by checking the zero-state direction.
\begin{cor}
Let the measurement horizon satisfy $N-k^* > n$. Then an input direction is fully private if and only if it is a zero-input direction, $u^0$, and
\begin{equation}\label{eq:observabilityzero}
    \begin{bmatrix} C \\ CA \\ \vdots \\CA^{n-1} \end{bmatrix}x^0 = 0.
\end{equation}
\end{cor}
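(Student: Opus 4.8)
The corollary claims an equivalence: under $N-k^* > n$, a direction is fully private iff it is a zero-input direction $u^0$ whose associated zero-state direction $x^0$ lies in the unobservable subspace (equation~\eqref{eq:observabilityzero}). By Theorem~\ref{thm:privin}, "fully private" is equivalent to $u$ lying in the null space of $\mathcal{O}$ in~\eqref{eq:t2e2}, i.e. $CA^jBu = 0$ for all $j \in \{0,\dots,N-1\}$. So the whole task reduces to showing that, when the horizon is long enough, this Markov-parameter condition is equivalent to the existence of a state $x^0$ solving the Rosenbrock equations~\eqref{eq:zerodir} together with~\eqref{eq:observabilityzero}. I would prove the two implications separately.

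Let me reason about what I need to unpack. Let me think carefully about the forward direction. Suppose $u^0$ is a zero-input direction with state $x^0$, so $(A - Iz_0)x^0 + Bu^0 = 0$ and $Cx^0 = 0$, and suppose~\eqref{eq:observabilityzero} holds, i.e. $CA^j x^0 = 0$ for $j=0,\dots,n-1$ (note $j=0$ already gives $Cx^0=0$). From the first Rosenbrock equation I get $Bu^0 = (Iz_0 - A)x^0$. The plan is to substitute this into $CA^jBu^0$ and show it vanishes for every $j$: indeed $CA^jBu^0 = CA^j(z_0 I - A)x^0 = z_0\,CA^j x^0 - CA^{j+1}x^0$. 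Both terms vanish for $0 \le j \le n-2$ directly from~\eqref{eq:observabilityzero}; to push past $j=n-1$ I would invoke Cayley-Hamilton to express $A^n$ (hence $CA^n x^0$, and inductively all higher powers) as a linear combination of $A^0,\dots,A^{n-1}$, so that $CA^j x^0 = 0$ extends to all $j \ge 0$. Thus $CA^jBu^0 = 0$ for all $j$, which is exactly $\mathcal{O}u^0 = 0$, and Theorem~\ref{thm:privin} gives full privacy.

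For the converse, suppose $u^0$ is fully private, so by Theorem~\ref{thm:privin} we have $CA^j B u^0 = 0$ for $j = 0, \dots, N-1$, hence (via Cayley-Hamilton again) for all $j \ge 0$. I need to manufacture a zero $z_0$ and a state $x^0$ with $Cx^0 = 0$, $CA^jx^0=0$, and $(A-Iz_0)x^0 = -Bu^0$. The natural candidate is to track the state trajectory driven by the constant input $u^0$: define $x^0$ as (a scalar multiple of) the steady-state or a suitable accumulated state, the condition $N-k^*>n$ guaranteeing enough Markov parameters to pin down the $n$-dimensional state behaviour. Here the hypothesis $m \ge p$ from the zero definition and the long-horizon condition are what let me convert the infinite family of vanishing Markov parameters into the finite Rosenbrock rank-deficiency condition; I expect this construction of $x^0$ and the correct choice of $z_0$ to be the main obstacle, since I must show the produced pair genuinely satisfies~\eqref{eq:zerodir} and not merely the observability relations.

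The cleanest route for the converse may be to argue at the level of transfer functions: $\mathcal{O}u^0 = 0$ for all Markov parameters means $C(sI-A)^{-1}Bu^0 \equiv 0$ as a rational function of $s$, and a standard Rosenbrock/PBH-type argument then yields a zero $z_0$ with the stated directions, with~\eqref{eq:observabilityzero} encoding exactly the unobservable part of $x^0$. I would lean on this transfer-function characterization to close the equivalence rather than grinding through an explicit trajectory construction.
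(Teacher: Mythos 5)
Your first implication (zero-input direction plus \eqref{eq:observabilityzero} implies fully private) is complete, correct, and is essentially the paper's own argument in a different packaging: the paper rewrites \eqref{eq:zerodir} via Cayley--Hamilton into the stacked identity \eqref{eq:zerosfull}, whose rows are exactly your relations $CA^j x^0 + \sum_{k=0}^{j-1} z_0^{j-1-k} CA^k B u^0 = 0$; your substitution $Bu^0 = (z_0 I - A)x^0$ followed by the Cayley--Hamilton extension of $CA^j x^0 = 0$ to all $j\geq 0$ is the same computation carried out row by row.

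The converse is where your proposal has a genuine gap. You correctly reduce ``fully private'' to $CA^j B u^0 = 0$ for all $j$ (using Theorem~\ref{thm:privin} and the horizon condition), but you never actually produce the pair $(z_0, x^0)$ demanded by \eqref{eq:zerodir}; you offer two candidate constructions (an accumulated-trajectory state, or ``a standard Rosenbrock/PBH-type argument'') and explicitly defer the obstacle. That deferred step is the entire content of this direction, so the proof is not done. The paper closes it in one line by working inside \eqref{eq:zerosfull} rather than \eqref{eq:zerodir}: when all Markov parameters $CA^k B u^0$ vanish, the whole $u^0$-column of \eqref{eq:zerosfull} is zero, so the choice $x^0 = 0$ satisfies \eqref{eq:zerosfull} for any $z_0$, and \eqref{eq:observabilityzero} is then trivial. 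If you want to finish your own transfer-function route instead, the missing construction is: $C(sI-A)^{-1}Bu^0 \equiv 0$ forces $Bu^0$ to lie in the unobservable subspace $\mathcal{N}$, which is $A$-invariant; choose any $z_0$ outside the spectrum of $A$ restricted to $\mathcal{N}$ and set $x^0 = -(A|_{\mathcal{N}} - z_0 I)^{-1} B u^0 \in \mathcal{N}$; then $(A - z_0 I)x^0 + Bu^0 = 0$ and $x^0 \in \mathcal{N}$ gives \eqref{eq:observabilityzero}. Note the feature your PBH instinct runs into: in this degenerate situation the Rosenbrock matrix loses rank at essentially \emph{every} $z_0$, so there is no single distinguished zero to ``find,'' and an argument that hunts for one will stall exactly where you predicted it would.
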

\begin{proof}
Note that, using the Cayley-Hamilton theorem,~\eqref{eq:zerodir} is equivalent to,
\begin{equation}\label{eq:zerosfull}
    \begin{bmatrix} C \\ CA \\ \vdots \\ CA^{n-1} \end{bmatrix}x^0 + \begin{bmatrix} 0 \\ CB  \\ \vdots \\ \sum \limits_{k=0}^{n-2} z_0^{n-2-k}CA^{k}B \end{bmatrix}u^0 = 0.
\end{equation}

If the input direction is fully private, then the second term in~\eqref{eq:zerosfull} is zero. One may then choose $x^0=0$ which gives that~\eqref{eq:zerosfull} is zero. If the input is a zero-input direction and~\eqref{eq:observabilityzero} holds, then~\eqref{eq:zerosfull} being zero implies $\mathcal{O}u^0=0$.
\end{proof}

Thus, if the number of samples, $N$, is larger than the number of states, $n$, one may study the zeros and the respective zero-state direction. If the zero-state direction is in the observable space of the system $M$, then there exists a fully private input direction and it is parallel to the zero-input.

\subsection{Privacy-Utility Trade-off} \label{sec:regul}

It might not always be desirable to use the most private input direction, especially if it is fully private. A system designer might require that some amount information about the inputs leaks in order to guarantee control performance, run diagnostics or to detect attacks. Therefore, it is useful to add something in between a ``fully private'' and a ``non-private'' input. In order to do this, a type of privacy measure is required.
\begin{defi}
    We say that the input direction $u_*$ for model $M$, is more private than $u$ if,
    \begin{equation*}
        \mathcal{B}_{u_*}\left(M \right)>\mathcal{B}_u\left(M \right), \quad \text{for } \Vert u_* \Vert = \Vert u \Vert.
    \end{equation*}
\end{defi}

Since the defender knows how the inputs change between different modes of operation, it can tailor the additive measurement noise by using Theorem~\ref{thm:main} in order to hide these changes. However, there could be a cost associated with generating noise, for example, if a battery with finite energy is used to perturb the measurements. If the system already has inherent measurement noise, then the designer could create controllers that use the existing noise to mask the input. 


Assume that a designer of system $M$ wishes to create a step input which minimizes the cost at steady state,
\begin{mini}
{x,u}{J(x,u)=x^\top Q x + u^\top R u}{}{}
\label{mini:cost}
\addConstraint{x}{=Ax+Bu}
\addConstraint{C_1x}{=r,}
\end{mini}
where $C_1 \in \mathbb{R}^{q \times n }$, $q<p$. 
Both $Q\in \mathbb{R}^{n \times n}$ and $R\in \mathbb{R}^{p \times p}$ are positive definite. The first constraint states that the system is at steady state, whereas the second constraint represents the desired reference value of some linear combinations of the states. The number of rows in $C_1$ is smaller than the number of inputs in order to ensure non-trivial solutions to the program. 

Assume now that the designer is willing to pay a price in the optimal cost in order to increase the privacy. They can do so by adding a regularizer to the optimization in the following way,
\begin{mini}|c|
{x,u}{J_p(x,u)=x^\top Q x + u^\top R u + \mu u^\top \mathcal{S}(\tau^*) u}{}{}
\label{mini:regularized}
\addConstraint{x}{=Ax+Bu,}
\addConstraint{C_1x}{=r,}
\end{mini}
where $\tau^*$ is the $\tau$ that maximizes~\eqref{eq:bound} for the input which minimizes~\eqref{mini:cost}. Additionally, let $(x_*,u_*)$ and $(x_p,u_p)$ denote the solutions to~\eqref{mini:cost} and~\eqref{mini:regularized}, respectively. The regularization parameter $\mu$ determines how large of a cost increase is tolerated in order to improve privacy. It is important to gain some intuition on how to choose this parameter and what guarantees can be given for a specific value of the parameter.
\begin{thm} \label{thm:privutiltrade}
The nominal privacy gain, $\delta$, is lower bounded by the utility loss, $\varepsilon \geq 0$ and upper bounded by the current privacy cost,
\begin{equation} \label{eq:t3e1}
    \mu u_*^\top\mathcal{S}(\tau^*)u_* \geq \mu \delta \geq \varepsilon,
\end{equation}
where,
\begin{equation*}
    \varepsilon = J(x_p,u_p)-J(x_*,u_*),
\end{equation*}
and
\begin{equation*}
    \delta = u_*^\top\mathcal{S}(\tau^*)u_*-u_p^\top\mathcal{S}(\tau^*)u_p.
\end{equation*}
\end{thm}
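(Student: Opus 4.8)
The plan is to exploit the variational characterizations of the two optima together with the positive semidefiniteness of the regularizer matrix, since both inequalities in~\eqref{eq:t3e1} are one-sided sandwich bounds that follow from simply comparing objective values at feasible points, without ever solving either quadratic program explicitly.

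First I would record two structural facts. By~\eqref{eq:sum}, $\mathcal{S}(\tau^*)$ is a sum of terms of the form $(C\tilde A(k,\tau^*)B)^\top \Sigma_e^{-1} C\tilde A(k,\tau^*)B$, each a Gram-type matrix with the positive definite weight $\Sigma_e^{-1}$; hence $\mathcal{S}(\tau^*)$ is positive semidefinite and $u^\top \mathcal{S}(\tau^*) u \geq 0$ for every $u$. I would also observe that $\mu \geq 0$ and, crucially, that the programs~\eqref{mini:cost} and~\eqref{mini:regularized} share the \emph{same} feasible set, so any minimizer of one is a feasible candidate for the other.

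For the lower bound $\mu\delta \geq \varepsilon$, I would invoke optimality of $(x_p,u_p)$ for the regularized objective. Since $(x_*,u_*)$ is feasible, $J_p(x_p,u_p) \leq J_p(x_*,u_*)$; expanding $J_p = J + \mu\, u^\top \mathcal{S}(\tau^*) u$ and rearranging yields exactly $J(x_p,u_p) - J(x_*,u_*) \leq \mu\bigl(u_*^\top \mathcal{S}(\tau^*) u_* - u_p^\top \mathcal{S}(\tau^*) u_p\bigr)$, i.e. $\varepsilon \leq \mu\delta$. The stated nonnegativity $\varepsilon \geq 0$ is the dual byproduct: because $(x_p,u_p)$ is feasible for~\eqref{mini:cost} and $(x_*,u_*)$ minimizes $J$ there, one has $J(x_p,u_p) \geq J(x_*,u_*)$.

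For the upper bound $\mu\, u_*^\top \mathcal{S}(\tau^*) u_* \geq \mu\delta$, I would just substitute the definition of $\delta$: the difference $\mu\, u_*^\top \mathcal{S}(\tau^*) u_* - \mu\delta$ equals $\mu\, u_p^\top \mathcal{S}(\tau^*) u_p$, which is nonnegative by the semidefiniteness established above. There is no genuine analytic obstacle here; the only point requiring care is the bookkeeping — keeping the feasible set common to both problems so that each optimizer can be tested against the other's objective, and confirming the sign conventions on $\mu$ and on $\mathcal{S}(\tau^*)$. The whole argument reduces to a two-sided optimality comparison.
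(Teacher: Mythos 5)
Your proposal is correct and follows essentially the same route as the paper: the key step in both is comparing the regularized objective at the two optimizers, $J_p(x_*,u_*) \geq J_p(x_p,u_p)$, using the fact that the two programs share the same feasible set, and then rearranging to get $\mu\delta \geq \varepsilon$. Your write-up is in fact slightly more complete than the paper's, since you make explicit the positive semidefiniteness of $\mathcal{S}(\tau^*)$ behind the first inequality (which the paper dismisses as trivial) and you also justify $\varepsilon \geq 0$, which the paper asserts without proof.
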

\begin{proof}
The first inequality is trivial, due to the definition of $\delta$. The second inequality is obtained by comparing the regularized costs, $J_p$, and rearranging the terms:
\begin{equation*}
\begin{aligned}
    & J_p(x_*,u_*)  \geq J_p(x_p,u_p)  \\ 
    & \iff x_*^\top Q x_*  + u_*^\top Ru_* + \mu u_*^\top \mathcal{S}(\tau^*)u_* \\ 
    & \geq x_p^\top Q x_p  + u_p^\top R u_p + \mu u_p^\top \mathcal{S}(\tau^*) u_p \\ 
    &  \iff \mu \left( u_*^\top \mathcal{S}(\tau^*)u_* - u_p^\top \mathcal{S}(\tau^*) u_p \right) \\ &\geq
     x_p^\top Q x_p  + u_p^\top R u_p - x_*^\top Q x_*  - u_*^\top Ru_*.
\end{aligned}
\end{equation*}
The last two lines is the second inequality stated in~\eqref{eq:t3e1}. \end{proof}

One may use the bounds in Theorem~\ref{thm:privutiltrade} to choose a $\mu$ which fulfills some guarantees. For example, if a maximum cost increase, $\bar \varepsilon$, is tolerated, then choosing the following $\mu$ ensures that the cost increase is upper bounded,
\begin{equation*}
    \mu = \frac{\bar \varepsilon}{u_*^\top \mathcal{S}(\tau^*)u_*} \Rightarrow \bar \varepsilon \geq \varepsilon.
\end{equation*}
This bound is not tight, which means that the designer can tune the parameter $\mu$ in order to increase privacy and thus increase the nominal $\varepsilon$ until $\bar \varepsilon$ is reached, if the initial $\mu$ does not provide sufficient privacy.

The second inequality in Theorem~\ref{thm:privutiltrade} gives an interpretation of what the regularization parameter does. By rewriting the inequality in the following manner,
\begin{equation}\label{eq:tradeoff}
    \mu \geq \frac{\varepsilon}{\delta},
\end{equation}
one may see that $\mu$ limits the maximum privacy-utility trade-off. When the designer chooses a specific $\mu$, they set the maximum tolerable utility loss per privacy unit that is gained. The bound also states that increasing the utility cost will increase the privacy as well, and conversely, if there is no privacy gain, then there will be no increase in the utility cost.

\begin{figure}
    \centering
    \includegraphics[width=1\linewidth,trim={0cm 8.2cm 0cm 8.2cm},clip]{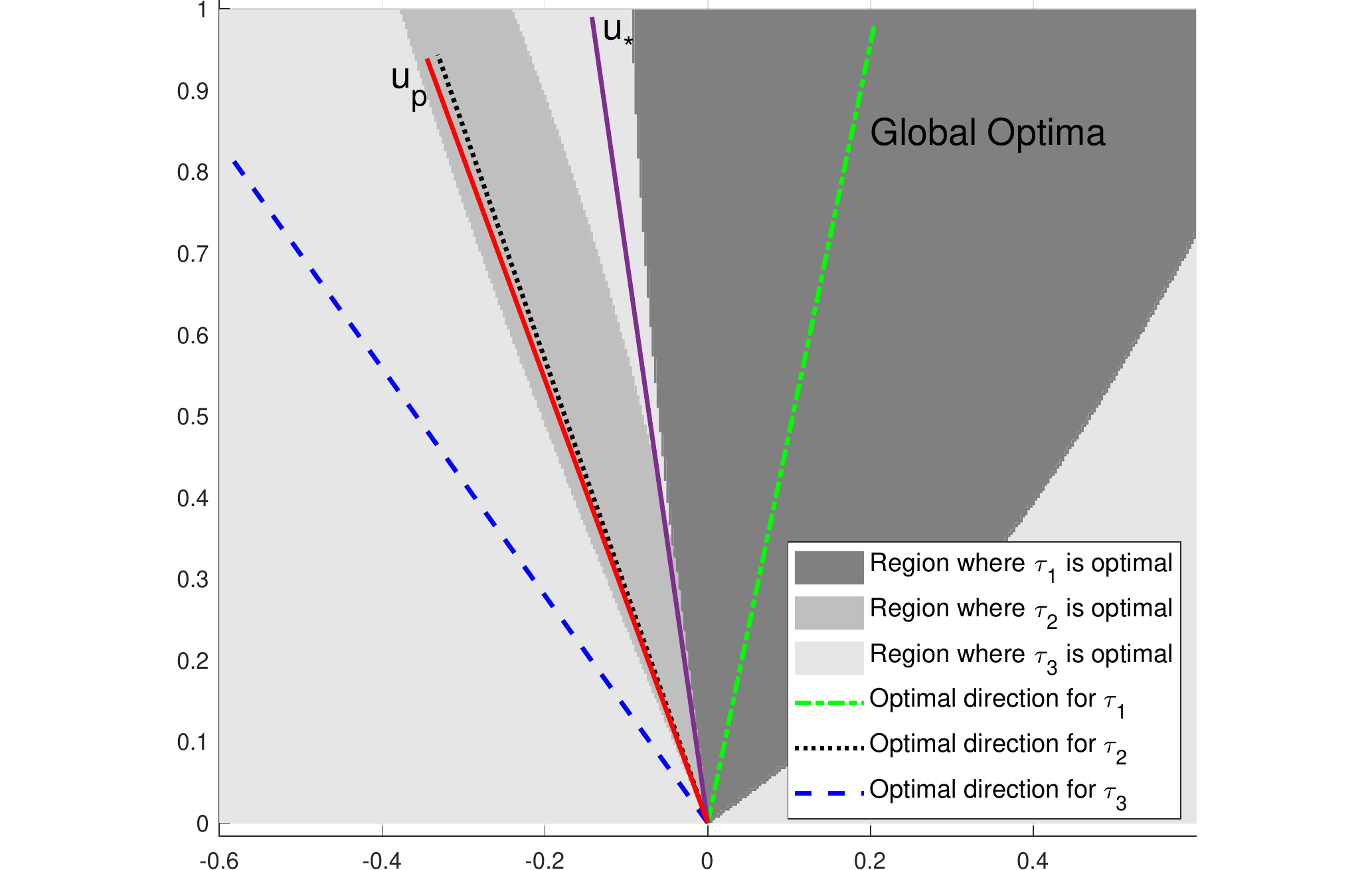}
    \vspace{-10pt}
    \caption{The figure shows how $\tau^*$ in~\eqref{eq:bound} changes as a function of $u$ for a system with five states and two inputs.}
    \vspace{0pt}
    \label{fig:optcost}
\end{figure}

{
Notice that the bound in~\eqref{eq:tradeoff} is only tight if $\tau^*$ minimizes~\eqref{eq:bound} for both $u_*$ and $u_p$. Consider the simple illustrating example which is shown in Fig.~\ref{fig:optcost}, where $\tau_3$ minimizes~\eqref{eq:bound} for $u_*$. The regularizer in~\eqref{mini:regularized} will push the solution $u_p$ towards the dashed, blue line, which gives maximum privacy for $\tau_3$. For some values of $\mu$, the solution $u_p$ might end up in a region that is maximized by $\tau_2$. In this case, $\mu$ can be interpreted as a looser bound on the maximum privacy-utility trade-off since the privacy gain is larger than what is captured by $\delta$.}

\vspace{-10pt}

\section{Numerical Results}\label{sec:numerics}
Let us return to the motivating example which was described in Section~\ref{sec:intro}, where an adversary tries to detect changes in the occupancy of different rooms in an apartment. The data for this example was taken from the IDA ICE 4.8 simulator of the Live-In Lab, KTH Testbed~\cite{liveinlabkth,idasimulator}. The Live-In Lab is a physical multi-residential building outfitted with sensors in every apartment that measure temperature, relative humidity, and CO$_2$. The KTH Testbed is a modifiable subsection of the Live-In Lab, containing 120 square metres of living space that are split up into four apartments in this example, which can be seen in Figure~\ref{fig:map}.

\begin{figure}
    \centering
    \includegraphics[width=1\linewidth,trim={7.6cm 4.7cm 7.6cm 3.2cm},clip]{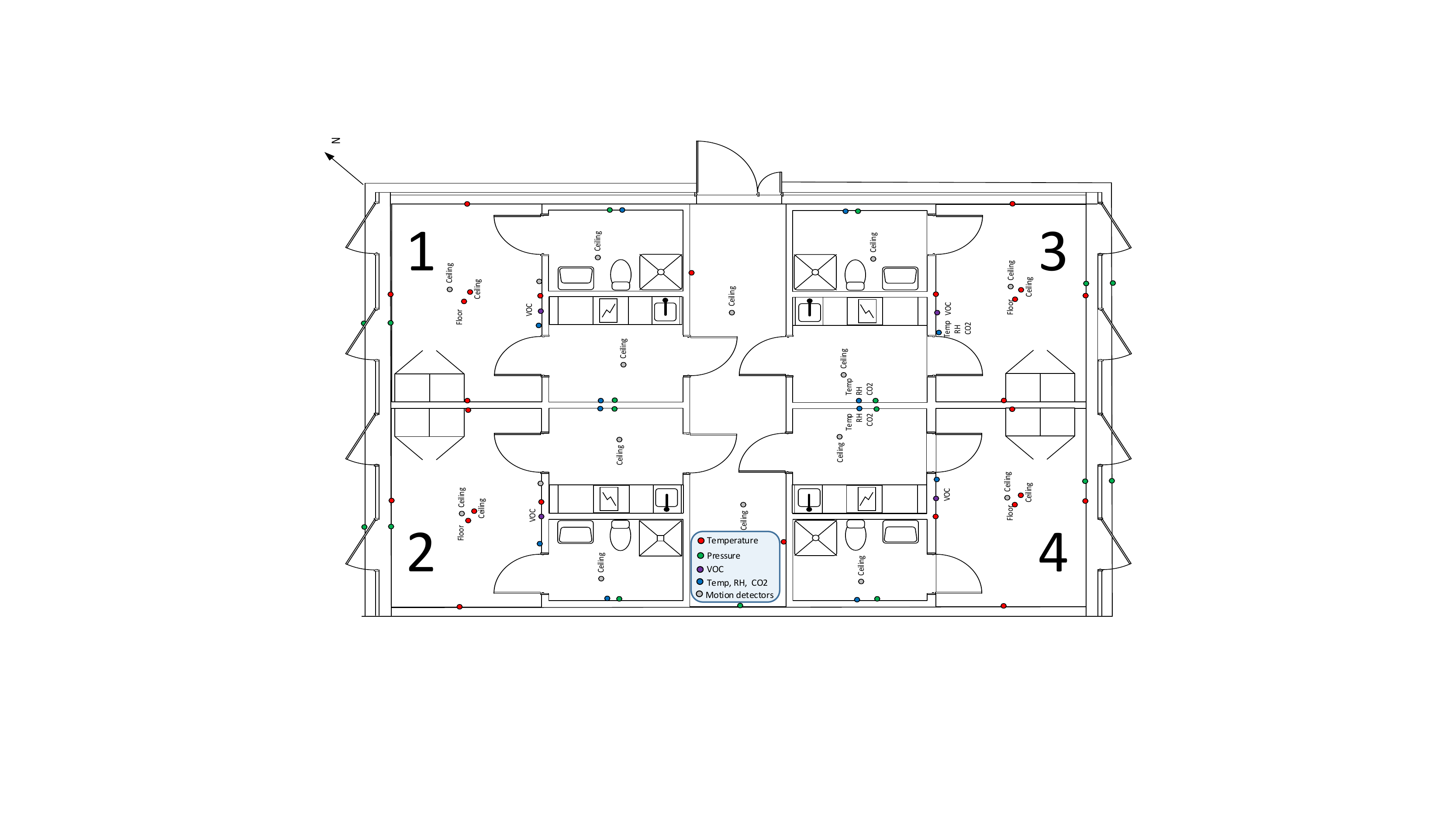}
    \vspace{-10pt}
    \caption{An overview of the apartments in the KTH Testbed.}
    \vspace{0pt}
    \label{fig:map}
\end{figure}

The adversary is assumed to have obtained the same model of the system as the defender, for example, through studying input-output data of similar apartments. Now, let the adversary eavesdrop on the system by sampling temperature and CO$_2$ measurements of the different rooms in the apartment every 9 minutes. It then estimates the input change by using a Moving Horizon Estimator~\cite{rawlings2013}. At the same time, the defender injects noise into the measurements in order to increase the variance of the adversary's estimation of the change time. The variance in the CO$_2$ sensor is four times larger than the temperature sensor, due to CO$_2$ being a much stronger indicator of occupancy.

\subsection{Detecting Occupancy Changes in Apartments}



\begin{table}
    \centering
    \caption{The table shows the lower bound, $\mathcal{B}_u$, and the empirical variance $\hat V$ in different rooms. The last column shows the input projected onto the most private input direction, $u_*$.}
    \vspace{0pt}
    \begin{tabular}{|c|c|c|c|} \hline
       Room &  $\mathcal{B}_u$ [min$^2$] & $\hat V$ [min$^2$] & $u \cdot u_*$ \\ \hline
         Living Room & 169  & 1570 & 1 \\
         Kitchen & 5.67 & 277 & 0.002 \\ 
         Bathroom & 18.6 & 145 & 0 \\ \hline
    \end{tabular}
    \label{tab:firstex}
\end{table}

Table~\ref{tab:firstex} shows the variance of the estimated change time for different rooms in the apartment. In the last column, the projection of the input to the system onto the most private input direction, $u_*$, is shown. Although the empirical variance, $\hat V$, increases as the input $u$ becomes more parallel to $u_*$, the same is not true for the theoretical lower bound, $\mathcal{B}_u$. The lower bound $\mathcal{B}_u$ in Table~\ref{tab:firstex} is largest for the input in the most private direction, which verifies Proposition~\ref{prop:most1private}, however, the input which is perpendicular to $u_*$ does not produce the lowest theoretical bounds. This discrepancy is explained by the non-convexity of~\eqref{eq:bound}. As $u$ is slowly rotated towards $u_*$, a different value of $\tau$ might become the minimizer of~\eqref{eq:bound}, with a different $\mathcal{S}(\tau,M)$. Because of this change, the input might pass a couple of local minima during the rotation.
Therefore, the high empirical variance $\hat V$ of the change time in the kitchen might be due to the sub-optimality of the estimator which is used.



\subsection{Private Steady State}


Let us now consider the case where the different rooms in one apartment cooperate in order to reduce the privacy leak. Consider a user entering their apartment. Then, instead of only heating the room that the user enters, the building could increase the heat production in some of the other rooms as well, thus obfuscating the attacker's estimation of the change time. This control input is obtained by solving~\eqref{mini:regularized}. In Table~\ref{tab:secondex}, the impact of changing the regularization parameter is shown. In the first row, the controller aims to only minimize the utility cost, whereas in the other two rows, the controller signal aims to both minimize the utility cost and to reduce the privacy leak. The private controllers are obtained for two different values of $\mu$, under the same measurement noise covariance. Higher values of $\mu$ did not produce any noticeable improvements in the privacy. One may see that the privacy-utility trade-off bound, given by~\eqref{eq:tradeoff}, holds for all instances. As discussed in Section~\ref{sec:regul}, the bound on the privacy-utility trade-off becomes tighter for smaller $\mu$. By increasing the {$\mu$} parameter, a larger trade-off is allowed. Additionally, one may see that both the theoretical and empirical variance, $\mathcal{B}_u$ and $\hat V$, increase, as $\mu$ increases as well.

\begin{table}
    \centering
    \caption{The table shows how changing the maximum privacy-utility trade-off, $\mu$, affects $\frac{\epsilon}{\delta}$, $\mathcal{B}_u$ and $\hat V$.}
    \vspace{-5pt}
    \begin{tabular}{|c|c|c|c|} \hline
        $\mu$ & $\frac{\epsilon}{\delta}$ &$\mathcal{B}_{u}$ [min$^2$] & $ \hat V ^2$ [min$^2$]  \\ \hline
        0 & - & 201 & 6 470 \\
        $2 \cdot 10^{-9}$ & $0.4 \cdot 10^{-9}$ & 489 & 45 500\\
        $4 \cdot 10^{-9}$& $1.13 \cdot 10^{-9}$ & 653 &  732 000 \\\hline
    \end{tabular}
    \label{tab:secondex}
    \vspace{0pt}
\end{table}

\vspace{-10pt}
\section{Conclusions}
\vspace{-5pt}
This paper shows how Gaussian noise can be used to hide the input changes to a multi-input-multi-output system. Specifically, the relation between private inputs and transmission zeros was analysed. 
Additionally, instead of injecting additional noise to improve privacy, a new approach where the defender makes use of the existing noise was presented. Using a convex program with a regularization term, the inputs at steady state could be made more private by increasing the regularization parameter. 
The value of this parameter was shown to capture
the privacy-utility trade-off. Furthermore, it was shown that increasing the steady state cost in the more private direction is a sufficient but not necessary condition for decreasing the privacy leakage. Finally, these results were verified on numerical simulations.

{Connections between the Cram\'er-Rao lower bound and differential privacy have previously been discussed in \cite{farokhisandberg2019}. Since the Chapman-Robbins bound is a generalization of the Cram\'er-Rao bound, one would expect that similar connections exist between differential privacy and~\eqref{eq:bound}. In fact, differential privacy can be used to establish a looser lower bound, which will be explored further in future work.}

Process noise is another type of noise which affects the system, and thus is of interest for future work. The difficulty that arises in this setting is that the process noise affects multiple time steps, making the corresponding expression for~\eqref{eq:sum} much more complex. Another future research direction is to analyze the same situation under smoother changes of the input.
This alternative approach would provide a generalization of the main results, giving the defender another dimension of possible defense strategies, for example, by {choosing time-varying variance of the noise, $\Sigma_e$}. Finally, relaxing the assumption that the defender needs to know the noise model \textit{a priori} would enable these results to be more applicable and is thus of high importance for future work.

\vspace{-5pt}
\section*{Acknowledgments}
This research was funded by the Swedish Foundation for Strategic Research through the CLAS project (grant RIT17-0046).

\vspace{-10pt}
\bibliographystyle{unsrt}
\bibliography{bibliography}

\newpage
\onecolumn
\appendix

\begin{proof}[Proof of Theorem~\ref{thm:main}]
The estimator, $\psi_u(Y,M)$, uses the measurements $Y$ and the model $M$ in order to estimate the change time $k^*$. Here, $k^*$ is a parameter which determines the probability distribution of $Y$. The minimum variance of the estimator is given by the Chapman-Robbins bound~\cite{chapmanrobbins1951},
\begin{equation*}
    \mathrm{Var}(\psi(Y,M) \big |k^*) \geq \sup \limits_{\tau \neq 0} \frac{\left( \mathbb{E}\left[ \psi(Y,M) \big| k^* + \tau \right] - \mathbb{E}\left[ \psi(Y,M) \big | k^* \right]  \right)^2}{\mathbb{E} \left[ \left( \frac{P\left(Y | k^*+\tau\right)}{P\left(Y | k^*\right)} -1 \right)^2 \bigg | k^*\right]},
\end{equation*}
where $\tau \in \mathbb{Z}$ and $P(Y|k^*)$ is the probability of obtaining measurements $Y$, conditioned on the change time is $k^*$. Evaluating the expectation in the denominator gives
\begin{equation} \label{eq:boundproof}
    \mathrm{Var}(\psi(Y,M) \big |k^*) \geq \sup \limits_{\tau \neq 0} \frac{\left( \mathbb{E}\left[ \psi(Y,M) \big| k^* + \tau \right] - \mathbb{E}\left[ \psi(Y,M) \big | k^* \right]  \right)^2}{\displaystyle\int \limits_{\mathbb{R}^N} \frac{P\left(Y | k^*+\tau\right)^2}{P\left(Y | k^*\right)} \mathrm{d}Y-1}.
\end{equation}

Since $\frac{P\left(Y | k^*+\tau\right)^2}{P\left(Y | k^*\right)} = \mathrm{e}^{2\log{P(Y_{}| k^*+\tau)}-\log{P(Y_{}| k^*)}}$, we write for $\tau \geq 1$,
\begin{align*}
    & \log{P(Y_{}| k^*+\tau)}-\log{P(Y_{}| k^*)} = \\ 
    & - \frac{1}{2}\sum \limits_ {k=1}^N {\left(y_k-CA^kx_0 - \sum \limits_{l = k^* + \tau}^{k-1}CA^{k-l-1}Bu \right)}^\top\Sigma_e^{-1}{\left(y_k-CA^kx_0 - \sum \limits_{l = k^* + \tau}^{k-1}CA^{k-l-1}Bu \right)} + \\
    & \frac{1}{2}\sum \limits_ {k=1}^N {\left(y_k-CA^kx_0 - \sum \limits_{l = k^*}^{k-1}CA^{k-l-1}Bu \right)^\top\Sigma_e^{-1}\left(y_k-CA^kx_0 - \sum \limits_{l = k^*}^{k-1}CA^{k-l-1}Bu \right)} =  \\
    & -\frac{1}{2} \sum \limits_{k=k^*+1}^N   (\sum \limits_{l = k^*}^{\mathrm{min}( k^* + \tau -1,k-1)} CA^{k-1-l}Bu)^\top\Sigma^{-1}_e(2y_k-(2CA^kx_0 + \sum \limits_{l = k^*}^{k-1}CA^{k-l-1}Bu + \sum \limits_{l = k^* + \tau}^{k-1}CA^{k-l-1}Bu)).
\end{align*}

Continuing, we see that,
\begin{align*} 
    & 2\log{P(Y_{}| k^*+1)}-\log{P(Y_{}| k^*)} = 2(\log{P(Y_{}| k^*+1)}-\log{P(Y_{}| k^*)})+\log{P(Y_{}| k^*)} = \\
    & -\frac{1}{2} \sum \limits_{k=k^*+1}^N   (\sum \limits_{l = k^*}^{\mathrm{min}( k^* + \tau -1,k-1)} CA^{k-1-l}Bu)^\top\Sigma^{-1}_e(2y_k-(2CA^kx_0 + \sum \limits_{l = k^*}^{k-1}CA^{k-l-1}Bu + \sum \limits_{l = k^* + \tau}^{k-1}CA^{k-l-1}Bu))- \\
    & - \frac{1}{2}\sum \limits_ {k=1}^N {\left(y_k-CA^kx_0 - \sum \limits_{l = k^*}^{k-1}CA^{k-l-1}Bu \right)^\top\Sigma_e^{-1}\left(y_k-CA^kx_0 - \sum \limits_{l = k^*}^{k-1}CA^{k-l-1}Bu \right)} =  \\
    & - \frac{1}{2}\sum \limits_{k=k^* +1}^N\left(y_k- G\right)^\top \Sigma^{-1}_e\left(y_k- G\right) + \sum \limits_{k=k^* +1}^N\left(\sum \limits_{l=k^*}^{\mathrm{min}( k^* + \tau -1,k-1)} CA^{k-1-l}Bu\right)^\top \Sigma_e^{-1}\left(\sum \limits_{l=k^*}^{\mathrm{min}( k^* + \tau -1,k-1)} CA^{k-1-l}Bu\right)  \\
    & - \frac{1}{2}\sum \limits_ {k=1}^{k^*} {\left(y_k-CA^kx_0 - \sum \limits_{l = k^*}^{k-1}CA^{k-l-1}Bu \right)^\top\Sigma_e^{-1}\left(y_k-CA^kx_0 - \sum \limits_{l = k^*}^{k-1}CA^{k-l-1}Bu \right)},
\end{align*}
where, 
\begin{equation*}
    G = CA^kx_0 + \sum \limits_{l=k^*}^{k-1}CA^{k-l-1}Bu- 2\left(\sum \limits_{l=k^*}^{\mathrm{min}( k^* + \tau -1,k-1)} CA^{k-1-l}Bu\right).
\end{equation*}
Inserting this expression into the bound in~\eqref{eq:boundproof}, evaluating the integral, and setting $\mathbb{E}\left[ \psi(Y,M) \big | k^* \right] = k^* + g(k^*)$, we obtain
\begin{equation}\label{eq:proofres}
    \mathrm{Var}(\psi(Y,M) \big |k^*) \geq \max \limits_{\tau \geq 1} \frac{ \left(\tau+g(k^*+\tau) - g(k^*)\right)^2    }{\mathrm{e}^{u^\top\mathcal{S}(\tau,M)u }-1},
\end{equation}
where,
\begin{equation*}
   \mathcal{S}(\tau,M) =  \sum \limits_{k=k^* +1}^N\left(\sum \limits_{l=k^*}^{\mathrm{min}( k^* + \tau -1,k-1)} CA^{k-1-l}Bu\right)^\top \Sigma_e^{-1}\left(\sum \limits_{l=k^*}^{\mathrm{min}( k^* + \tau -1,k-1)} CA^{k-1-l}Bu\right). 
\end{equation*}
For $\tau \leq -1$, an equivalent calculation can be made giving the same expression as~\eqref{eq:proofres}, but replacing $\mathcal{S}$ with
\begin{equation*}
    \mathcal{S}_-(\tau,M) = \sum \limits_{k=\hat k+\tau +1}^N\left(\sum \limits_{l=\hat k+\tau}^{\mathrm{min}(\hat k -1,k-1)} CA^{k-1-l}Bu \right)^\top \Sigma_e^{-1}\left(\sum \limits_{l=\hat k+\tau}^{\mathrm{min}(\hat k -1,k-1)} CA^{k-1-l}Bu \right).
\end{equation*}

However, since
\begin{equation*}
    \mathcal{S}(|\tau|,M) \leq \mathcal{S}_-(-|\tau|,M),
\end{equation*}
for each positive integer $|\tau|$, we can ignore the $\tau \leq -1$ cases.
\end{proof}

\end{document}